\newtheorem{fact}{Fact}
\providecommand{\U}[1]{\protect\rule{.1in}{.1in}}
\begin{document}
\pagestyle{plain}

\title{\large Representation of Quantum Circuits with Clifford and $\pi/8$ Gates}%

\author{Ken Matsumoto \and Kazuyuki Amano}%


\institute{Department of Computer Science, Graduate School of Engineering, Gunma University\\
Tenjin 1-5-1, Kiryu, Gunma 376-8515 Japan\\
\tt matsumoto@ja4.cs.gunma-u.ac.jp, amano@cs.gunma-u.ac.jp} 

\maketitle

\begin{abstract}
In this paper, we introduce the notion of a normal form of
one qubit quantum circuits over the basis $\{H, P, T\}$,
where $H$, $P$ and $T$ denote the Hadamard, Phase and $\pi/8$
gates, respectively.
This basis is known as the {\it standard set} and its
universality has been shown by Boykin et al. [FOCS '99].
Our normal form has several nice properties: (i)
Every circuit over this basis can easily be transformed
into a normal form, and (ii) Every two normal form
circuits compute same unitary matrix if and only if both circuits
are identical.
We also show that the number of unitary operations that can be 
represented by
a circuit over this basis that contains at most $n$ $T$-gates is 
exactly $192 \cdot (3 \cdot  2^n - 2)$.
\begin{flushleft}
{\bf Keywords: } clifford group, representation of universal set, normal form
\end{flushleft}

\end{abstract}

\section{Introduction and results}

Quantum computing is a very active area of research because of its ability 
to efficiently solve problems for which no efficient classical algorithms are known. 
For example, it is possible for a quantum computer to 
solve integer factorization in polynomial time with Shor's algorithms \cite{Sho94}. 
However, it is not yet known whether quantum computers are strictly more powerful than classical computers. 

Quantum algorithms are realized by a quantum circuit consisting of basic gates 
corresponding to unitary matrices.
In other words, the design of quantum algorithms can be seen as a
decomposition of a unitary matrix into a product of matrices chosen from a basic set.
A discrete set of quantum gates is called {\it universal}
if any unitary transformation can be approximated with an arbitrary 
precision by a circuit involving those gates only.
For example, Boykin et al. \cite{B99} proved that 
the basis $\left\{ H , T , CNOT \right\}$ is universal, where $H$C$T$, and $CNOT$ are called 
the Hadamard gate, the $\pi/8$ gate, and the controlled-NOT gate, respectively, and given by
\begin{eqnarray*}
H = \frac{1}{\sqrt{2}}
\begin{pmatrix}
 1 &  1 \\
 1 & -1
\end{pmatrix}
 \ , \ \  T  =   
\begin{pmatrix}
 1 & 0 \\
 0 & e^{i\pi/4}
\end{pmatrix}
 \ , \ \  CNOT  = 
\begin{pmatrix}
 1 & 0 & 0 & 0 \\
 0 & 1 & 0 & 0 \\
 0 & 0 & 0 & 1 \\
 0 & 0 & 1 & 0 
\end{pmatrix}.
\end{eqnarray*}
The basis $\left\{ H , T , CNOT \right\}$ is called the
{\it standard set} \cite[pp. 195]{NC00} and plays a fundamental role in the 
theory of quantum computing 
as the classical universal set $\{AND , NOT\}$ plays in the theory of classical computing. 

The Solovey-Kitaev theorem (see \cite{DN05} or \cite[Appendix 3]{NC00}) says 
that polynomial size quantum circuits 
over this standard set can solve all the problems in ${\bf BQP}$, 
where ${\bf BQP}$ is the class of problems 
that can be solved efficiently by quantum computers.

The situation is dramatically changed
if we replace the $T$-gate by the $T^2$-gate in this basis.
The gate that performs the unitary operation $P=T^2$ is known as the Phase gate.
Quantum circuits over the basis $\{H, P, CNOT\}$ is usually
called {\it stabilizer circuits} or {\it clifford circuits}.
The Gottesman-Knill theorem says that circuits 
over this basis $\{H, P, CNOT\}$ are not more powerful 
than classical computers (see e.g., \cite[Chap. 10.5.4]{NC00}).
A stronger limitation of clifford circuits has also
been derived \cite{AG04,B06}. 
Recently, Buhrman et al. \cite{B06} showed that 
every Boolean function that can be represented by a clifford
circuit is written as the parity of a subset of input variables
or its negation.

These give an insight that the $T$-gate is 
the root of the power of quantum computing. 
It may be natural to expect that
the research on the effect of the $T$-gate may lead
to better understanding of why
a quantum computer can efficiently compute some hard problems.

In this paper, we concentrate on {\it one qubit} circuits over the
standard set, i.e., $\{H, T\}$ and analyze the properties
of them.
It seems difficult to give an efficient representation for a 
given unitary matrix with elements of such a discrete universal set, 
because a relation between a quantum circuit and 
the corresponding unitary matrix is not clear. 
However, if a good representation is found, it will be useful 
for designing an efficient quantum circuit.

The main contribution of this paper is as follows:
We introduce a representation named {\it normal form} for one qubit 
circuits over the universal basis $\{H, T\}$. 
Let ${\cal C}_1$ be the set of $2 \times 2$ unitary matrices
that can be represented by a circuit over the clifford basis $\{H,P\}$.
The set ${\cal C}_1$ forms a group known as {\it Clifford group}
and has order 192.
Our normal form is defined recursively as follows.
\begin{enumerate}
 \item[(a)] For each $D \in {\cal C}_1$, 
 a shortest circuit over $\{H, P\}$ that represents $D$ is a normal form
 (we break ties arbitrarily).
 \item[(b)] If $C$ is a normal form whose leftmost (closest to the output) gate is not $T$, then each of $TC$, $HTC$, and $PHTC$ is a normal form. 
\end{enumerate}
Equivalently, a normal form circuit is of the form
$W_n T W_{n-1} T \cdots T W_1 T W_0$ for some $n \geq 0$ where
$W_n \in \{I, H, PH\}$, $W_{i} \in \{H, PH\}$ for $i=1, \ldots, n-1$
and $W_0 \in {\cal C}_1$.

Our normal form has several good properties :
\begin{enumerate}
 \item[(1)] a normal form circuit has high regularity, 
 \item[(2)] every one qubit circuit over $\{H,T\}$ (or $\{H,P,T\}$)
            can easily be transformed into an equivalent normal form
						circuit, and
 \item[(3)] two normal form circuits perform same computation 
            if and only if both circuits are identical. 
\end{enumerate}
(3) is a surprising property. This enables us to decide
whether two normal form circuits perform same computation without 
calculating the matrix product.  
This can also be used to estimate the number of
$2 \times 2$ unitary matrices represented by a circuit over 
$\left\{H , P , T \right\}$ with at most $n$ $T$-gates. 
The number is exactly $192\cdot (3 \cdot 2^n - 2)$.

This paper is organized as follows. In Section \ref{Preliminaries}, we introduce our definitions and notations. 
In Section \ref{def_normal_form}, we define the normal form and discuss its properties. 
Section \ref{prove_normal_form} is devoted to the proof of our
main result.
In Section \ref{number_of_T_gates}, we discuss the
number of matrices that can be represented by a circuit 
over $\{H,P,T\}$ with a limited number of $T$-gates.

\section{Preliminaries}
\label{Preliminaries}

In this section, we introduce the definitions and notations 
needed to understand the normal form of circuits. 

Throughout the paper, we concentrate on {\it one qubit} quantum
circuits.
A one qubit quantum circuit can be represented by a string 
consisting of symbols each of which represents a gate.
An operation performed by a gate is represented by a unitary
matrix of degree two.
For example, $HPP = HP^{2}$ expresses a circuit which 
performs operations $P$, $P$ and $H$ in this order from the input side. 
By convention, when we draw a circuit, the input
is on the right side and the computation proceeds from right to left (see Figure \ref{fig:circuit_mikata}). 
\begin{figure}[t]
  \begin{center}
    \includegraphics[clip, scale = 0.29]{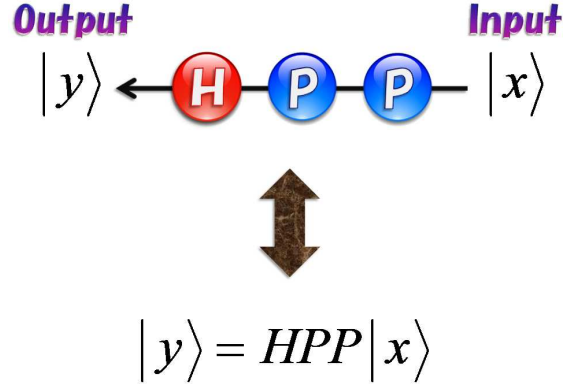}
    \caption{A circuit and the corresponding computation.}
    \label{fig:circuit_mikata}
  \end{center}
 \end{figure}
We usually distinguish a circuit from the matrix computed by
the circuit, 
because different circuits may yield same computation.
For example, two circuits $HHP$ and $PHH$ perform the same computation
since $H\cdot H \cdot P = P \cdot H \cdot H = P$. 
We define the quantum circuit family as follows.
\begin{definition}
\label{def:circuit_family}
Let ${\bf X} = \left\{ G_{1},G_{2},\cdots,G_{m} \right\}$ be the set of symbols, or ``gates". 
The set of all strings on ${\bf X}$ is denoted by $F\left( {\bf X} \right)$. 
A map $f_{u}$ from $F\left( {\bf X} \right)$ to a set of unitary matrices of degree two is defined as follows :
For a gate $G \in {\bf X}$, let $f_u(G)$ be a unitary matrix representing an operation performed by $G$.
For $A_1, \ldots, A_n \in {\bf X}$, 
\begin{eqnarray*}
f_{u}\left( A_{n}A_{n-1} \cdots A_{1} \right) \ = \ f_u(A_{n}) \cdot f_u(A_{n-1}) \cdot \cdots  \cdot f_u(A_{1}).
\end{eqnarray*}
In what follows, we say that a circuit $C$ computes the matrix $f_u(C)$.
A set of all unitary matrices computed by a circuit over
${\bf X}$ is given by
\begin{eqnarray*}
f_{u}\left( F\left( {\bf X} \right) \right) \ := \ \left. \left\{ f_{u}\left( x \right) \ \right| \ x \in F\left( {\bf X} \right) \right\}.
\end{eqnarray*}
The equivalence relation on $F\left( {\bf X} \right)$ is defined to be
\begin{eqnarray*}
a \sim b \stackrel{\mathrm{def}}{\Longleftrightarrow} f_{u}\left(a\right) = f_{u}\left(b\right) \ \ , \ \ a,b \in F\left({\bf X} \right).
\end{eqnarray*}
The quantum circuits family on ${\bf X}$, denoted by $C\left( {\bf X} \right)$, is defined as 
\begin{eqnarray*}
C\left({\bf X}\right)  \ \ := \  \ F\left({\bf X}\right) / \sim   \ \ =   \ \left.  \Big\{ \left[ a \right]   \ \right|  \  a \in F\left({\bf X}\right) \Big\},
\end{eqnarray*}
where 
\begin{eqnarray*}
\left[ a \right] \ := \ \left. \Big\{ b \in F\left({\bf X}\right) \right| a \sim b \Big\}.
\end{eqnarray*}
\end{definition}
Here $\left[ a \right]$ is the equivalence class consisting of 
all circuits that computes the same matrix as $a$. 
Therefore $C\left({\bf X}\right)$ is a family of equivalence
classes of strings, or circuits.

Throughout the paper, we usually distinguish a circuit from the 
corresponding matrix. 
However, when there is no danger of confusion, we will simply denote
$A$ instead of $f_u(A)$.

In this paper, we mainly consider quantum circuits over
two sets of basis $\{H, P\}$ and
$\{H, P, T\}$, where $H$, $P$ and $T$ are the Hadamard, phase
and $\pi/8$ gates, respectively.


\begin{definition}
Let ${\bf X_{c}} := \left\{ H , P \right\}$ and we call ${\bf X_{c}}$
the {\it clifford basis}.
The clifford circuit family is defined as $C\left({\bf X_{c}}\right)$, 
where 
\begin{eqnarray*}
f_{u}\left( H \right) = \frac{1}{\sqrt{2}}
\begin{pmatrix}
 1 & 1 \\
 1 & -1
\end{pmatrix}
 \ , \ \  f_{u}\left( P \right)  =   
\begin{pmatrix}
 1 & 0 \\
 0 & i
\end{pmatrix}. 
\end{eqnarray*}
\end{definition}
\begin{definition}
Let ${\bf X_{s}} := \left\{ H , P , T \right\}$, and
we call ${\bf X_{s}}$ the {\it standard basis}. 
The standard circuit family is defined as $C\left({\bf X_{s}}\right)$, 
where 
\begin{eqnarray*}
f_{u}\left(T\right) = 
\begin{pmatrix}
 1 & 0 \\
 0 & e^{i\pi/4}
\end{pmatrix}.
\end{eqnarray*} 
\end{definition}

Readers may wonder why the symbol $P$ appears in our
standard basis ${\bf X_{s}}$, because 
\begin{eqnarray*}
f_{u}\left(TT\right) = f_{u}\left( P \right).
\end{eqnarray*} 
We include it in ${\bf X_{s}}$ in order to make the standard circuit family be strictly 
more powerful than the clifford circuit family. 
The complexity of a given unitary matrix is usually defined as the
minimum number of basic gates needed to compute it.
If we don't include $P$ in ${\bf X_{s}}$, then we need
two gates to compute $P$ on the standard basis whereas
it can be computed by a single gate on the clifford basis.

Unitary groups corresponding to the clifford and standard circuit 
families play a fundamental role in the theory of quantum computing. 

\begin{definition}
The clifford group $\mathcal{C}_{1}$ on one qubit is defined as
\begin{eqnarray*}
\left\langle H,P \right\rangle  &:=&  f_{u}\Big( F\left( {\bf X_{c}} \right) \Big). 
\end{eqnarray*}
\end{definition}

In other words,
${\mathcal{C}_1} = \left\langle H,P \right\rangle$ is the set
of all $2 \times 2$ unitary matrices that can be computed by
a circuit over $\{H,P\}$.
The order of $\mathcal{C}_{1}$ is known to be 192. 
Note also that there is a trivial bijection from the clifford circuit family to $\mathcal{C}_{1}$.

\begin{definition}
The standard group on one qubit is defined as
\begin{eqnarray*}
\left\langle H,P,T \right\rangle  & := &  f_{u}\Big( F\left( {\bf X_{s}} \right) \Big).
\end{eqnarray*}
\end{definition}

In other words, $\langle H,P,T \rangle$ is the set of all $2 \times 2$
unitary matrices that can be computed by a circuit over $\{H,P,T\}$.
It is known that $\langle H, P, T\rangle$ is infinite
group and is universal \cite{B99} in a sense that
any $2 \times 2$ unitary matrix can be approximated by a matrix in this group
with an arbitrary precision.

\section{Representative of the standard circuit family}
\label{def_normal_form}
In this section, we introduce the notion of a normal form circuit,
which can be used as a representative of classes of the standard 
circuit family.


We first define the representative of the clifford circuit family 
as follows. 

\begin{definition}
Let $\left[ a \right] \in C\left({\bf X_{c}}\right)$. 
A representative of
$\left[ a \right]$ is defined to be a shortest string in $\left[ a \right]$ (we break ties arbitrarily). 
A representative of the clifford circuit family
is called {\it a clifford circuit}.
\end{definition}

This seems to be a natural definition, and in fact,
the way of selection will not largely affect the analysis of 
$\mathcal{C}_{1}$, because the order of $\mathcal{C}_{1}$ is
relatively small, say 192.


It seems to be difficult to find representatives of the standard 
circuit family because it is an infinite group. 
Hence we first generate
all the circuits consisting of relatively small number of gates by
using a computer, and 
then analyze them in order to find a ``pattern".
This leads to the following definition of our ``normal form".

\begin{definition}
A normal form circuit is a circuit 
over $\{H, P, T\}$ and is defined recursively as follows :
\begin{enumerate}
 \item[(a)] Every clifford circuit is a normal form. 
 \item[(b)] If $C$ is a normal form whose leftmost gate (closest
 to the output) is not $T$, then each of $TC$, $HTC$, and $PHTC$ is a normal 
form. 
\end{enumerate}
\end{definition}

For example, if $D$ is a clifford circuit, then $TD$ and $PHTHTD$ are
normal form whereas $THPHD$ and $TTD$ are not.
Equivalently, a normal form circuit is of the form
$W_n T W_{n-1} T \cdots T W_1 T W_0$ for some $n \geq 0$ where
$W_n \in \{I, H, PH\}$, $W_{i} \in \{H, PH\}$ for $i=1, \ldots, n-1$, 
$W_0$ is a clifford circuit, and $I$ is the $2 \times 2$ identity matrix 
(see Figures \ref{fig:normal_form} and \ref{fig:how_to_see_Fig_nf1}).
The set ${\bf M}_n$ in Figure \ref{fig:normal_form} is defined
as the set of all matrices that can be computed by a circuit 
over $\{H,P,T\}$ that contains at most $n$ $T$-gates. 
Note that ${\bf M}_{0} = \mathcal{C}_{1}$. 

\begin{figure}[t]
  \begin{center}
    \includegraphics[clip, scale = 0.29]{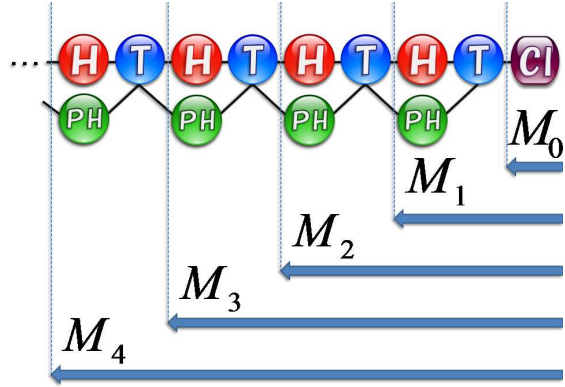}
    \caption{The normal form}
    \label{fig:normal_form}
  \end{center}
 \end{figure}
\begin{figure}[t]
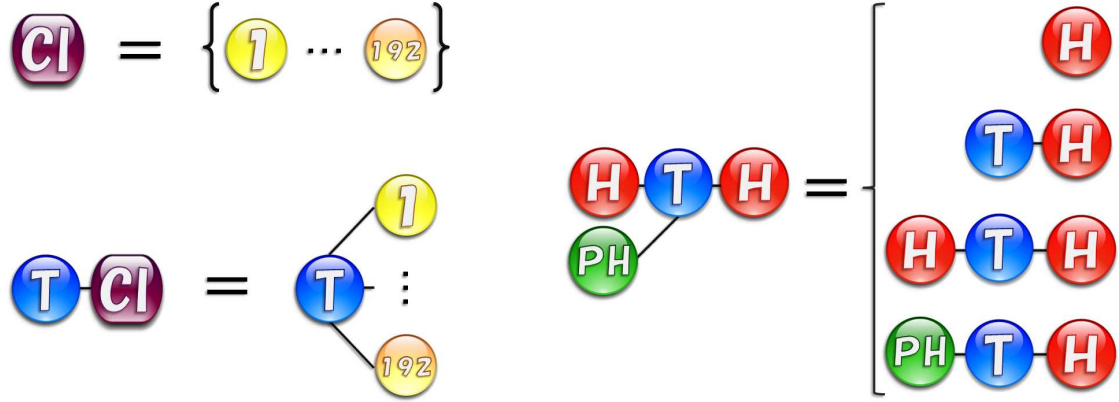

  \begin{center}
    \includegraphics[clip, scale = 0.29]{seikikei_mikata1.eps}
    \includegraphics[clip, scale = 0.29]{seikikei_mikata2.eps}
    \caption{(Left) {\cal Cl} in Figure \ref{fig:normal_form} 
		denotes the set of shortest circuits
		over $\{H,P\}$ for each matrix in ${\cal C}_1$, these are
		denoted by $1 \sim 192$ $(=|{\cal C}_1|)$.
		(Right) A normal form circuit is corresponding to a path from an
		arbitrary chosen gate to one of the rightmost gates in Figure
		\ref{fig:normal_form}.}
    \label{fig:how_to_see_Fig_nf1}
  \end{center}
 \end{figure}

Our normal form representation is very powerful and appealing, because it has nice properties as follows: 
\begin{enumerate}
\item[(1)] a normal form circuit has high regularity, 
\item[(2)] every one qubit circuit over $\{H,P,T\}$
can easily be transformed into an equivalent normal form circuit,
\item[(3)] two normal form circuits compute same matrix 
if and only if both circuits are identical (comparison can be made as a string). 
\end{enumerate}

\begin{remark}
In this paper, we concentrate on circuits over the basis
$\left\{ H , P , T  \right\}$. 
However, we can also define a normal form for circuits over other bases.
For example, 
for circuits over the basis $\left\{ R , P , T  \right\}$, 
where
\begin{eqnarray*}
R = \frac{1}{\sqrt{2}}
\begin{pmatrix}
 1 &  1 \\
-1 &  1
\end{pmatrix}, 
\end{eqnarray*}
we can show that if we replace $H$ with $R$ in the definition
of our normal form, then the modified normal form 
satisfies all the above properties.
\end{remark}

We now state our main theorem.
\begin{theorem}
\label{theo:normal_form}
The number of normal forms in each equivalence class of $C\left({\bf X_{s}}\right)$ 
is exactly one. 
\qed
\end{theorem}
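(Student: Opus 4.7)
I would split the theorem into two parts: (existence) every equivalence class of $C({\bf X_s})$ contains at least one normal form, and (uniqueness) no two distinct normal forms compute the same matrix. The plan is to prove each separately and combine.

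For existence, the plan is to induct on the length of an arbitrary representative string in the class. The base case (no $T$-gates) is immediate since every clifford circuit is declared a normal form. For the inductive step, decompose a circuit around its rightmost $T$ (closest to input) as $C' \cdot T \cdot W_0$ with $W_0$ a clifford block, apply the inductive hypothesis to put $C'$ into normal form, and then reconcile the junction. The required rewrite tools are: $T^2 = P$ (to collapse two adjacent $T$'s into a single clifford element, thereby reducing the $T$-count); $PT = TP$ (since both are diagonal, powers of $P$ slide freely across a $T$-gate); and a structural lemma about $\mathcal{C}_1$ stating that any clifford element sandwiched as $T D T$ can be rewritten as $A \cdot T \cdot W \cdot T \cdot B$ with $W \in \{H, PH\}$ and $A, B$ cliffords — absorbing $A, B$ into the adjacent blocks of the already-normalized $C'$ finishes the step. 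This structural lemma can be verified by case analysis on cosets of $\langle P \rangle$ in $\mathcal{C}_1$, or by a direct enumeration over the 192 clifford elements.

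For uniqueness, I would proceed by strong induction on the $T$-count, showing that the leftmost (output-side) block $W_n \in \{I, H, PH\}$ of a normal form is determined by its matrix. Suppose two distinct normal forms $N_1, N_2$ compute the same matrix $M$. First, one shows both have the same $T$-count via an algebraic invariant of $M$, for example the minimum power of $\sqrt{2}$ appearing in denominators of the matrix entries, viewed as elements of $\mathbb{Z}[\omega, 1/\sqrt{2}]$ with $\omega = e^{i\pi/4}$. Writing $N_j = W_n^{(j)} T C_j$, the crux is to show $W_n^{(1)} = W_n^{(2)}$: the three candidates $I, H, PH$ produce distinguishable phase/denominator patterns when multiplied on the left (e.g., $I$ leaves entries unchanged, $H$ mixes rows and introduces a factor of $1/\sqrt{2}$, and $PH$ additionally multiplies the second row by $i$), so at most one choice can match $M$. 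Once $W_n^{(1)} = W_n^{(2)}$, cancel from the left, use the invertibility of $T$ to obtain $C_1 \sim C_2$, and apply the inductive hypothesis to the normal forms $C_1, C_2$ (whose leftmost gate is not $T$, so the peeling is well-defined), yielding $C_1 = C_2$ and contradicting $N_1 \neq N_2$.

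The hard part will be pinning down the invariant used in the uniqueness step. Showing that $W_n$ is determined by the matrix requires a uniform handle on the matrices produced by arbitrary-length normal form suffixes $T W_{n-1} T \cdots T W_1 T W_0$; their entries depend intricately on the $\{H, PH\}$-pattern of the intermediate $W_i$, and the analysis must work for every $n$ simultaneously. I expect Section \ref{prove_normal_form} to isolate auxiliary lemmas giving explicit formulas or closed-form structural invariants for these entries — perhaps in terms of the denominator exponent together with the phase residue of a chosen entry modulo $8$ — from which the outer gate is readable at a glance, and the induction then goes through cleanly.
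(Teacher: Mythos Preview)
Your existence argument is close, but the structural lemma you state is not the right tool. The claim ``$TDT = A\,T\,W\,T\,B$ with $W \in \{H, PH\}$'' is false when $D$ lies in the subgroup $C_T(\mathcal{C}_1) = \{g \in \mathcal{C}_1 : TgT^{-1} \in \mathcal{C}_1\}$, since then $TDT$ collapses to a Clifford element; and even when it holds, the general Clifford $A$ you produce on the left destroys the already-normalized structure of $C'$ above (after absorption $S_1 A$ need not lie in $\{H, PH\}$), so ``absorbing $A,B$ finishes the step'' does not hold as written. The paper's version is one-sided: for every Clifford $W$ there exist $S \in \{I, H, PH\}$ and a Clifford $W'$ with $W\,T = S\,T\,W'$; this is exactly the left-coset decomposition $\mathcal{C}_1 = C_T(\mathcal{C}_1) \cup H\,C_T(\mathcal{C}_1) \cup PH\,C_T(\mathcal{C}_1)$ (Fact~\ref{subgroup_clifford}). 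With that rule your right-to-left induction does go through cleanly --- only an $S \in \{I,H,PH\}$ appears on the left, and the $S=I$ case is absorbed by $TT = P$ exactly as you say --- so the fix is just to restate the lemma.

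Your uniqueness plan, however, is a genuinely different route from the paper's, and as written it is incomplete: you correctly identify that reading off the leftmost block from a ring-theoretic invariant of the entries in $\mathbb{Z}[\omega, 1/\sqrt{2}]$ is the hard part, but then defer it to hoped-for lemmas in Section~\ref{prove_normal_form} --- and those lemmas are not there. The paper does not analyze matrix entries at all. Instead it first reduces uniqueness to the statement ``no normal form with at least one $T$-gate computes $I$'' (multiply one normal form by the inverse of the other and renormalize), and proves that by \emph{stabilizer tracking}: it follows the Bloch-sphere vector $(x,y,z)$ satisfying $(xX+yY+zZ)\,C_{(\ell)}|0\rangle = C_{(\ell)}|0\rangle$ through the circuit, writes each coordinate as $(a + b\sqrt{2})/\sqrt{2}^{\,\ell}$ with $a,b \in \mathbb{Z}$, and shows by a mod-$2$ case analysis (nine parity classes, closed under the transitions induced by $HT$, $PHT$, and the terminal $T$) that the vector never returns to $(0,0,1)$. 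Your denominator-exponent idea can be made to work --- it is essentially the later ``smallest denominator exponent'' argument of Kliuchnikov--Maslov--Mosca --- but you would have to supply that analysis yourself; the paper's machinery will not hand it to you.
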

We prove this theorem in the next section. 
Theorem \ref{theo:normal_form} can also be used to derive the
order of ${\bf M_{n}}$, which will be described in Section \ref{number_of_T_gates}. 

\section{The proof of Theorem \ref{theo:normal_form}}
\label{prove_normal_form}

We divide the proof of Theorem \ref{theo:normal_form} into
two parts :
\begin{enumerate}
 \item[(I)] Every quantum circuit over $\{H,P,T\}$ 
            can be transformed into an equivalent normal form circuit
            (we call this operation ``normalization"). 
 \item[(II)] For every two distinct normal form circuits $C_{1}$ and $C_{2}$, $f_{u}\left( C_{1} \right) \ne  f_{u}\left( C_{2} \right)$ holds.
\end{enumerate}

The statement (I) guarantees that each equivalence class contains at least one
normal form circuit, and the statement (II) asserts the uniqueness.

\subsection{The proof of (I)}
\label{sec:proof_of_I}


In order to show Statement (I),
we describe the normalization procedure.
We first give a useful property of the clifford group ${\cal C}_1$.
Let $C_{T}({\cal C}_1)$ be a subgroup of ${\cal C}_1$ defined as
 \begin{eqnarray*}
 C_{T}\left( \mathcal{C}_{1} \right) &:=& \left. \left\{ TgT^{-1} \right| g, TgT^{-1} \in \mathcal{C}_{1}  \right\}.
 \end{eqnarray*}
Note that $g \in C_{T}({\cal C}_1)$  if and only if $TgT^{-1} \in C_{T}({\cal C}_1)$, and that a generating set of $C_{T}\left( \mathcal{C}_{1} \right)$ is $\left\{ P , HP^{2}H , \left( HP\right)^{3}  \right\}$, namely
\begin{eqnarray*}
\left\{          
\begin{pmatrix}
 1 & 0 \\
 0 & i
\end{pmatrix}, 
\begin{pmatrix}
 0 & 1 \\
 1 & 0
\end{pmatrix}, 
e^{\frac{\pi}{8}i} \cdot I
\right\}. 
\end{eqnarray*}
Note also that there exists an isomorphism mapping $C_{T}\left( \mathcal{C}_{1} \right) / \mathcal{K}$ into $D_{4}$, namely 
\begin{eqnarray*}
C_{T}\left( \mathcal{C}_{1} \right) / \mathcal{K} \simeq D_{4}, 
\end{eqnarray*}
where 
\begin{eqnarray*}
\mathcal{K} =  \left\{ e^{\frac{i\pi}{8} k} \cdot I
\Bigm| k \in \{0, 1, \ldots, 7\}  \right\}
\end{eqnarray*}
and $D_{4}$ is the dihedral group of degree four. 
The following fact is easily verified by a direct calculation.
\begin{fact}
\label{subgroup_clifford}
The clifford group ${\mathcal C}_1$ can be represented
as follows:
\begin{eqnarray*}
\mathcal{C}_{1} = C_{T}\left( \mathcal{C}_{1} \right) + HC_{T}\left( \mathcal{C}_{1} \right) + PHC_{T}\left( \mathcal{C}_{1} \right), 
\end{eqnarray*}
where 
$HC_{T}\left( \mathcal{C}_{1} \right) := \left. \left\{ Hh \right| h \in C_{T}\left( \mathcal{C}_{1} \right)  \right\},$ and
$PHC_{T}\left( \mathcal{C}_{1} \right)  :=  \left. \left\{ PHh \right| h \in C_{T}\left( \mathcal{C}_{1} \right)  \right\}$. 
\qed
\end{fact}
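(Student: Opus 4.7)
The plan is to prove the decomposition by a pure cardinality argument once the index of $C_T(\mathcal{C}_1)$ in $\mathcal{C}_1$ is pinned down. From the displayed isomorphism $C_T(\mathcal{C}_1)/\mathcal{K} \simeq D_4$ together with $|\mathcal{K}|=8$, we obtain $|C_T(\mathcal{C}_1)| = 8\cdot 8 = 64$, and hence the index is $192/64 = 3$. Consequently $\mathcal{C}_1$ is the disjoint union of exactly three left cosets of $C_T(\mathcal{C}_1)$, and if $\{I, H, PH\}$ turns out to be a set of coset representatives then the three sets on the right-hand side automatically partition all $192$ elements of $\mathcal{C}_1$, which is the claim.

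Showing that $\{I, H, PH\}$ is a transversal reduces to the two non-membership statements $H\notin C_T(\mathcal{C}_1)$ and $HPH = H^{-1}(PH) \notin C_T(\mathcal{C}_1)$; the third required condition $PH\notin C_T(\mathcal{C}_1)$ is automatic once the first holds, since $P\in C_T(\mathcal{C}_1)$ and so $T(PH)T^{-1} = P\cdot THT^{-1} \in \mathcal{C}_1$ would force $THT^{-1}\in \mathcal{C}_1$. I would verify each of the two non-memberships via the standard Pauli-conjugation characterization of the Clifford group: writing $X = \bigl(\begin{smallmatrix}0 & 1 \\ 1 & 0\end{smallmatrix}\bigr)$, $Y = \bigl(\begin{smallmatrix}0 & -i \\ i & 0\end{smallmatrix}\bigr)$, $Z = \bigl(\begin{smallmatrix}1 & 0 \\ 0 & -1\end{smallmatrix}\bigr)$ for the Pauli matrices, a unitary $M$ lies in $\mathcal{C}_1$ up to global phase iff both $MXM^{-1}$ and $MZM^{-1}$ are signed Pauli matrices. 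A one-line computation gives $TXT^{-1} = (X+Y)/\sqrt{2}$ and $TZT^{-1} = Z$; hence $(THT^{-1})\, Z\, (THT^{-1})^{-1} = T(HZH)T^{-1} = TXT^{-1} = (X+Y)/\sqrt{2}$, which is not a signed Pauli, so $THT^{-1}\notin \mathcal{C}_1$. The case of $HPH$ is an entirely analogous short calculation, using the already-computed actions of $H$ and $P$ on $X, Z$ together with $TXT^{-1}$.

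The main thing to be careful about is the non-membership step itself. It is tempting to disqualify $TgT^{-1}$ from $\mathcal{C}_1$ just by inspecting its matrix entries, but $\mathcal{C}_1$ already contains elements with entries of phase $\pi/4$ (for instance the entry $(1+i)/2$ appearing in $(HP)^2$), so a naive entry-based argument is delicate. The Pauli-conjugation test avoids this trap entirely: each non-membership reduces to a purely algebraic two-matrix identity, and once those are dispatched, the Fact follows immediately from the index computation.
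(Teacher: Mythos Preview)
Your argument is correct. The paper itself offers no proof beyond the assertion that the decomposition is ``easily verified by a direct calculation,'' presumably meaning a brute-force enumeration of the $192$ matrices in $\mathcal{C}_1$ (the transformation tables in the Appendix essentially carry this out). Your approach replaces that enumeration by a clean index argument: you use the stated isomorphism $C_T(\mathcal{C}_1)/\mathcal{K}\simeq D_4$ to obtain $|C_T(\mathcal{C}_1)|=64$ and hence index $3$, and then verify that $I,H,PH$ represent distinct left cosets via the Pauli-conjugation test. This is more conceptual than the paper's route and has the virtue of isolating precisely the two short computations---that $THT^{-1}$ and $T(HPH)T^{-1}$ fail to normalize the Pauli group---on which the Fact rests. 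The one point worth making explicit is that only the \emph{forward} direction of the Clifford characterization is needed (if $M\in\langle H,P\rangle$ then $M$ conjugates Paulis to signed Paulis), and this follows immediately from the generator identities the paper later records as Eqs.~(\ref{Eq:transR}) and~(\ref{Eq:transP}); no appeal to the full normalizer characterization is required.
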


Note that $C_T({\cal C}_1)$, $HC_T({\cal C}_1)$ and
$PHC_T({\cal C}_1)$ in the above fact are the residue classes of ${\cal C}_1$ and have order 64.
Fact \ref{subgroup_clifford}
guarantees that for every $W_0$ in ${\cal C}_1$,
$W_0 T = S_0 T W_1$ for some $S_0 \in \{I, H, PH\}$
and some clifford circuit $W_1$. 
This gives a set of basic transformation rules of our normalization.
The correctness follows from the definition of $C_T({\cal C}_1)$:
(i) if $W_0 \in C_T({\cal C}_1)$ then $W_0 = T W_1 T^{-1}$ for some
$W_1 \in {\cal C}_1$,
(ii) if $W_0 \in HC_T({\cal C}_1)$ then $W_0 = H T W_1 T^{-1}$ for
some $W_1 \in {\cal C}_1$, and 
(iii) if $W_0 \in PHC_T({\cal C}_1)$ then $W_0 = PH T W_1 T^{-1}$
for some $W_1 \in {\cal C}_1$.

A complete table of our basic transformation rules, which
contains $|{\cal C}_1| = 192$ rules and
categorized into three groups, is given in Appendix.

Roughly speaking, the normalization is
to apply this transformation rule to a given circuit from 
left to right. 
For example, when the initial circuit is given by
$W_3 T W_2 T W_1 T W_0$, where $W_i$, $i=0,1,2,3$ is a clifford 
circuit, the normalization proceeds as described below.
Here we use symbols $W_i$ to denote a clifford circuit, and
$S_i$ to denote a circuit in $\{I, H, PH\}$.
\begin{eqnarray*}
\begin{array}{lcll}
W_{3}  \ T  \ W_{2} \ T \ W_{1} \ T \ W_{0} &=& \ \ \underline{ W_{3}  \ T } \ W_{2} \ T \ W_{1} \ T \ W_{0} & \\
&=& \ \ \displaystyle {\uwave{ S_{0}  \ T \ W_{4} }} \ W_{2} \ T \ W_{1} \ T \ W_{0}  \mbox{\quad}
&\mbox{(apply the rule to } W_3 T) \\
&=& \ \ S_{0}  \ T \ W_{5} \ T \ W_{1} \ T \ W_{0} & (W_5 := W_4 W_2) \\
                                            \ \ &=& \ \ S_{0}  \ T \ \underline{ W_{5} \ T } \ W_{1} \ T \ W_{0} \\
                                            \ \ &=& \ \ S_{0}  \ T \ \uwave{ S_{1} \ T \ W_{6} }\ W_{1} \ T \ W_{0} 
&\mbox{(apply the rule to } W_5 T) \\
                                            \ \ &=& \ \ S_{0}  \ T \ S_{1} \ T \ W_{7} \ T \ W_{0}  & (W_7 := W_6 W_1) \\
                                            \ \ &=& \ \ S_{0}  \ T \ S_{1} \ T \ \underline{ W_{7} \ T } \ W_{0} &  \\
                                            \ \ &=& \ \ S_{0}  \ T \ S_{1} \ T \ \uwave{ S_{2} \ T \ W_{8}} \ W_{0} 
&\mbox{(apply the rule to } W_7 T) \\
                                            \ \ &=& \ \ S_{0}  \ T \ S_{1} \ T \ S_{2} \ T \ W_{9} & (W_9 := W_8 W_0).
\end{array}
\end{eqnarray*}
If $S_i = I$ at some step of the process, then we
``merge" two $T$-gates into a $P$ gate by applying the identity $P=TT$, and
continue the process.
For example, if $S_2=I$ in the above equation,
then we further transform the last circuit to 
\begin{eqnarray*}
S_0 \ T \ S_1 \ T \ S_2 \ T \ W_9 = 
S_0 \ T \ S_1 \ P \ W_9 = S_0 \ T \ W_{10}.
\end{eqnarray*}
It is obvious that, for every input
circuit over $\{H,P,T\}$, the resulting circuit of this normalization
process is a normal form circuit.
This established Statement (I).
\qed

\medskip
It should be noted that the normalization can be performed
in time linear in the number of gates in an initial circuit.

\subsection{The proof of (II)}

The proof of Statement (II) is divided into two subproofs:
\begin{itemize}
 \item[(II-A)] If (II) is false, i.e.,
 there are two distinct normal form circuits $C_1$ and $C_2$ with
 $f_u(C_1) =f_u(C_2)$, then there is a normal form circuit $C$ containing 
 one or more $T$-gates 
                such that $f_{u}\left( C \right) = I$. 
 \item[(II-B)] For every normal form circuit $C$ containing one or more $T$-gates, 
               $f_{u}\left( C \right) \ne I$ holds. 
\end{itemize}

The meaning of the symbols appeared in the proof are as follows. 
\begin{itemize}
 \item $W_a, W_b, \ldots, W_z$, and $W_{j}$ ($j = 0,1,2,\cdots$) denote a clifford circuit, 
 \item $A_{j} ,  B_{j} , C_{j} , D_{j}  \ , \ j = 1,2,\cdots$ denote $H$ or $PH$, 
 \item $A^{\prime} ,  B^{\prime} , C^{\prime} , D^{\prime}  \ , \ j = 0,1,2,\cdots$ 
       denote $I$ or $H$ or $PH$. 
\end{itemize}

Before we proceed to the proof of Statement (II),
we give the following simple lemma that says 
the set ${\bf M_{=n}}$ is closed under the inverse operation, 
where ${\bf M_{=n}} := {\bf M_{n}} \backslash {\bf M_{n-1}}$. 
Recall that ${\bf M_n}$ denotes the set of all matrices
that can be computed by a circuit over $\{H,P,T\}$ with at most
$n$ $T$-gates.
\begin{lemma}
\label{lemma_II-A}
Let $C$ be the normal form circuit containing $n$ $T$-gates. 
If $f_u(C) \in {\bf M_{=n}}$, then $f_u(C)^{-1} \in {\bf M_{=n}}$. 
\end{lemma}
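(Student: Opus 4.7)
The plan is to reduce the lemma to the following more general closure statement: for every $k \geq 0$, the set $\mathbf{M}_k$ is closed under taking inverses. Once this is in hand, the lemma follows from a short contrapositive argument. Applying the closure to $f_u(C) \in \mathbf{M}_{=n} \subseteq \mathbf{M}_n$ gives $f_u(C)^{-1} \in \mathbf{M}_n$; and if $f_u(C)^{-1}$ were actually in $\mathbf{M}_{n-1}$, a second application of the closure would yield $f_u(C) = (f_u(C)^{-1})^{-1} \in \mathbf{M}_{n-1}$, contradicting the hypothesis that $f_u(C) \in \mathbf{M}_{=n}$. Hence $f_u(C)^{-1} \in \mathbf{M}_{=n}$ as required.

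To prove the closure, I would use the key identity $T^{-1} = P^{3} T = T P^{3}$, which is valid because $T$ and $P$ are both diagonal (hence commute) and satisfy $T^{2} = P$, so $T^{-1} = T^{7} = T \cdot P^{3}$. Since $P^{3} \in \mathcal{C}_{1}$, the crucial consequence is that $T^{-1}$ can be realized by a circuit containing exactly one $T$-gate. Combined with the trivial identities $H^{-1} = H$ and $P^{-1} = P^{3}$, both of which stay inside the clifford basis, this means that given any circuit $D$ over $\{H, P, T\}$ with at most $k$ $T$-gates, I can produce a circuit for $f_u(D)^{-1}$ by reversing the order of the gates and substituting each gate by its inverse using the identities above. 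Each substitution turns one $T$-gate into one $T$-gate (possibly decorated with non-$T$ gates from $\mathcal{C}_{1}$), so the resulting circuit still contains at most $k$ $T$-gates, giving $f_u(D)^{-1} \in \mathbf{M}_k$.

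There is essentially no serious obstacle in this argument: everything rests on the identity $T^{-1} = P^{3}T$, after which the closure of $\mathbf{M}_k$ under inversion is immediate. The only subtlety is remembering to invoke the closure statement in both directions, so that the upper bound $\mathbf{M}_n$ on $f_u(C)^{-1}$ obtained from one application can be upgraded to the exact membership $\mathbf{M}_{=n}$ by contradicting the exactness hypothesis on $f_u(C)$.
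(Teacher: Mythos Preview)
Your proposal is correct and follows essentially the same approach as the paper: both arguments hinge on the identity $T^{-1}=T^{7}=TP^{3}$ to show that inverting a circuit over $\{H,P,T\}$ does not increase the $T$-count (so $\mathbf{M}_k$ is closed under inverses), and then apply this closure twice in a contrapositive to upgrade $f_u(C)^{-1}\in\mathbf{M}_n$ to $f_u(C)^{-1}\in\mathbf{M}_{=n}$. Your phrasing of the closure as a standalone statement for all $k$ is slightly more explicit than the paper's, but the substance is identical.
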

\begin{proof}
It is trivial for $n = 0$. 
Let $n \geq 1$ and suppose that $C$ satisfies 
$f_u(C) \in {\bf M_{=n}}$. 
Then $C$ can be written as 
\begin{eqnarray}
\label{hodai:k+1noseikikei}
A^{\prime} \ T \  A_{n-1} \ T \ \cdots \ T \ A_{2} \ T \ A_{1} \ T  \ W_{a}.
\end{eqnarray}
The inverse matrix of $f_u(C)$ is given by 
\begin{eqnarray*}
W_{a}^{-1}  \ T^{-1} \ A_{1}^{-1}  \ T^{-1} \ A_{2}^{-1} \  \cdots \ 
\left(A^{\prime}\right)^{-1},
\end{eqnarray*}
and is represented as
\begin{eqnarray}
\label{hodai:k+1noseikikeinogyakugen}
W_{n} \ T \ W_{n-1} \ T \ \cdots \ T \ W_{1} \ T \ W_{0} 
\end{eqnarray}
since $T^{-1}=T^7=TP^3$.
This implies
$f_u(C)^{-1} \in {\bf M_{n}}$. 

Suppose that the lemma is false; $f_u(C)^{-1} \in {\bf M_{n-1}}$. 
The above argument gives $(f_u(C)^{-1})^{-1}=f_u(C) \in {\bf M_{n-1}}$, 
which contradicts the assumption that $f_u(C) \in {\bf M_{=n}}$. 
This completes the proof of the lemma.
\qed
\end{proof}

\subsubsection{The proof of (II-A)}
Suppose that (II) is false, i.e., 
there are two distinct normal form circuits $U_a$ and $U_b$ such that
$f_u(U_{a}) = f_u(U_{b})$.
Fix an arbitrary such pair $(U_a,U_b)$ that minimizes 
$t( U_a ) + t( U_b )$,
where $t\left( C \right)$ denotes the number of occurrences of $T$ 
in $C$. 
Put $m=t(U_a)$ and $n=t(U_b)$.
Without loss of generality we assume that $m\geq n$.
We write $U_{a}$ and $U_{b}$ as
\begin{eqnarray*}
\label{siki:saiyounoTwomotukairo}
U_{a} &:=& A^{\prime} T  A_{m-1} \cdots  A_{2}  T  A_{1} T W_{a}, \\
U_{b} &:=& B^{\prime} T  B_{n-1} \cdots  B_{2}  T  B_{1} T W_{b},
\end{eqnarray*}
respectively. 
We can also assume that $A^{\prime} \neq B^{\prime}$ since
otherwise a subcircuit of $U_a$ starting at $A_{m-1}$ and
a subcircuit of $U_b$ starting at $B_{n-1}$ compute
same matrix.

For a while we identify a circuit with the corresponding matrix.
Then we can write 
\begin{eqnarray}
\label{syoumei:TnokosuuKsaisyou}
A^{\prime} T  A_{m-1} \cdots  A_{2}  T  A_{1} T W_{a} \ &=& \ B^{\prime} T  B_{n-1} \cdots  B_{2}  T  B_{1} T W_{b}. 
\end{eqnarray}
The inverse matrix of $f_u(U_b)$ is
\begin{eqnarray}
\label{Eq:Ub-1}
W_b^{-1} T^{-1} B_1^{-1} T^{-1} B_2^{-1} \cdots B_n^{-1} T^{-1} (B')^{-1},
\end{eqnarray}
and this can also be represented by a normal
form with $n$ $T$-gates by Lemma \ref{lemma_II-A}, 
which we write as 
\begin{eqnarray}
\label{syoumei:Bnogyakugen_Tonaji}
C^{\prime} T  C_{n - 1} \cdots  C_{2}  T  C_{1} T W_{c}. 
\end{eqnarray}
Here we divide the proof into two cases.
 
\noindent
(Case\ 1) $m > n$. 

By multiplying Eq. (\ref{syoumei:Bnogyakugen_Tonaji}) from the
right to both sides of Eq. (\ref{syoumei:TnokosuuKsaisyou}), we have
\begin{eqnarray}
\label{labelnonamaegaomoitukanai}
A^{\prime} T  A_{m - 1} \cdots  A_{2}  T  A_{1} T \underline{W_{a}  \ C^{\prime} T}  C_{n - 1} T \cdots  T C_{2}  T  C_{1} T W_{c} &=& I. 
\end{eqnarray}
We consider the normalization of the LHS of 
Eq. (\ref{labelnonamaegaomoitukanai}). 
By applying the basic normalization rule to $W_{a}  \ C^{\prime} T$ 
at the underlined part in Eq. (\ref{labelnonamaegaomoitukanai}),
we obtain
\begin{eqnarray}
\label{labelnonamaegaomoitukanai_a}
A^{\prime} T  A_{m - 1} \cdots  A_{2}  T  A_{1} T \  \uwave{S \  T  W_{0}} C_{n - 1} T \cdots  T C_{2}  T  C_{1} T W_{c} &=& I,
\end{eqnarray}
where $S = I$, $H$ or $PH$. 
If $S = H$ or $PH$, 
then the leftmost $T$ never disappeared
during the normalization process, and hence a normalized circuit
for the LHS of Eq. (\ref{labelnonamaegaomoitukanai})
contains at least one $T$ and computes the identity matrix.

We now assume $S = I$. 
Then Eq. (\ref{labelnonamaegaomoitukanai_a}) is written as
\begin{eqnarray*}
A^{\prime} T  A_{m - 1} \cdots  A_{2}  T  \underline{A_{1} \  P  \  W_{0} C_{n - 1} T} \cdots  T C_{2}  T  C_{1} T W_{c} &=& I. 
\end{eqnarray*}
Note that the left and right $T$-gates of $S$ in 
Eq. (\ref{labelnonamaegaomoitukanai_a}) are disappeared by
applying the identity $P = T^{2}$.  
By applying the basic transformation rule again to $A_{1}   P  W_{0} C_{n - 1} T$ at the underlined part in the above equation,
we obtain
\begin{eqnarray}
\label{labelnonamaegaomoitukanai4}
A^{\prime} T  A_{m - 1} \cdots  A_{2}  T  \uwave{ S T \ \cdot\  }\cdots  T C_{2}  T  C_{1} T W_{c} &=& I. 
\end{eqnarray}
If $S = H$ or $PH$, then Statement (II-A) is established by the
same argument as above.
If $S=I$ for every step of the normalization, then the normalized
circuit for the LHS of Eq. (\ref{labelnonamaegaomoitukanai}) has 
$m - n$ $T$-gates.
These complete the proof of Case 1.

\noindent
(Case\ 2) $m=n$. 

For $n = m \leq  1$,  we can check the statement by a direct computation.

Let $n = m>1$. 
By multiplying Eq. (\ref{Eq:Ub-1}) from the
right to both sides of Eq. (\ref{syoumei:TnokosuuKsaisyou}), we have
\begin{eqnarray*}
A^{\prime} T  A_{n - 1} \cdots  A_{2}  T  A_{1} T \underline{W_a W_b^{-1} T} P^3 B_1^{-1} T^{-1} \cdots T^{-1} (B')^{-1} & = &  I. 
\end{eqnarray*}
By applying the basic transformation rule to the underlined part
in the above equation, 
we obtain
\begin{eqnarray}
\label{labelnonamaegaomoitukanai2}
A^{\prime} T  A_{n - 1} \cdots  A_{2}  T  A_{1} T \uwave{S T W_z} P^3 B_1^{-1} T^{-1} \cdots T^{-1} (B')^{-1} & = &  I. 
\end{eqnarray}
If $S = H$ or $S = PH$, 
then we can see that
the normalized circuit of the LHS of Eq. (\ref{labelnonamaegaomoitukanai2})
contains at least one $T$ by the same argument to the proof of Case 1.

Assume that $S = I$, i.e., 
\begin{eqnarray*}
\label{labelnonamaegaomoitukanai3}
A^{\prime} T  A_{n - 1} \cdots  A_{2}  T  A_{1} P  W_z P^3 B_1^{-1} T^{-1} \cdots T^{-1} (B')^{-1} & = &  I. 
\end{eqnarray*}
This implies
\begin{eqnarray*}
\label{Eq:fin}
A^{\prime} T  A_{n - 1} \cdots  A_{2}  T  A_{1} = B' T  \cdots T (PW_zP^3B_1^{-1})^{-1}. 
\end{eqnarray*}
If we replace the rightmost term by an equivalent clifford 
circuit, then both sides in the above equation is a normal form circuit
that contains $n-1$ $T$-gates. 
In addition, since $A^{\prime} \neq B^{\prime}$, these are 
different circuits.
This contradicts our choice of $U_a$ and $U_b$.
These complete the proof of Case 2, and so the proof of (II-A).
\qed
\medskip
 
\subsubsection{The proof of (II-B)}
\medskip

The idea of the proof is borrowed from the stabilizer
formalism \cite[p.454]{NC00} (or see also \cite{AG04,EEC07}). 
Let $|\psi\rangle$ denote a one qubit state :
\begin{eqnarray*}
|\psi\rangle = \alpha |0\rangle + \beta|1 \rangle, 
\end{eqnarray*}
where $|\alpha|^2+|\beta|^2=1$. 
Let
\begin{eqnarray*}
X=\left(\begin{array}{cc}
0 & 1\\
1 & 0
\end{array}\right), \quad
Y=\left(\begin{array}{cc}
0 & -i\\
i & 0
\end{array}\right), \quad
Z=\left(\begin{array}{cc}
1 & 0\\
0 & -1
\end{array}\right).
\end{eqnarray*}
\begin{definition}
For a tuple of three real numbers $(x,y,z) \in \mathbb{R}^3$,
the matrix $M_{(x,y,z)}$ is defined as
\begin{eqnarray*}
M_{(x,y,z)} := xX+yY+zZ. 
\end{eqnarray*}
We say that $(x,y,z)$ {\it stabilizes} $|\psi\rangle$ if
\begin{eqnarray*}
M_{(x,y,z)} |\psi \rangle = |\psi \rangle. 
\end{eqnarray*}
\end{definition} 
The following two facts are easily verified. 
\begin{fact}
\label{Fact:Z}
If $(x,y,z)$ stabilizes $|0\rangle$, then
$(x,y,z) = (0,0,1)$.
\end{fact}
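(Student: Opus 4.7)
The plan is to reduce the assertion to a direct matrix computation. First I would apply each Pauli matrix to the basis vector $|0\rangle=\bigl(\begin{smallmatrix}1\\0\end{smallmatrix}\bigr)$, obtaining $X|0\rangle=|1\rangle$, $Y|0\rangle=i|1\rangle$, and $Z|0\rangle=|0\rangle$. Linearity then gives
\begin{equation*}
M_{(x,y,z)}|0\rangle \;=\; xX|0\rangle+yY|0\rangle+zZ|0\rangle \;=\; z|0\rangle + (x+iy)|1\rangle.
\end{equation*}

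Now I would impose the stabilizing condition $M_{(x,y,z)}|0\rangle=|0\rangle$ and equate coefficients of the orthonormal basis vectors $|0\rangle$ and $|1\rangle$. The coefficient of $|0\rangle$ yields $z=1$, while the coefficient of $|1\rangle$ yields $x+iy=0$. Because $(x,y,z)\in\mathbb{R}^3$ by the hypothesis of the definition, the latter complex equation splits into $x=0$ and $y=0$, giving $(x,y,z)=(0,0,1)$ as required. There is no real obstacle here: the fact is essentially immediate once one writes down the action of $X$, $Y$, $Z$ on $|0\rangle$, and the only subtlety worth flagging is the use of the reality of $x,y$ to conclude $x+iy=0\Rightarrow x=y=0$.
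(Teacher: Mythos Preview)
Your proof is correct and essentially identical to the paper's own argument: both compute $M_{(x,y,z)}|0\rangle$ directly (the paper writes out the full $2\times 2$ matrix first, you instead apply each Pauli separately), obtain $z|0\rangle+(x+iy)|1\rangle$, and then use the reality of $x,y$ to conclude $(x,y,z)=(0,0,1)$.
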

\begin{proof}
Since $(x,y,z)$ stabilizes $|0\rangle$, we have 
\begin{eqnarray*}
M_{(x,y,z)}\left(\begin{array}{c} 1 \\ 0 \end{array}\right) & = &
\left(\begin{array}{cc}
z & x-yi \\
x+yi & -z 
\end{array}\right) 
\left(\begin{array}{c} 1 \\ 0 \end{array}\right) 
 =  
\left(\begin{array}{c}
z \\ 
x+yi 
\end{array}\right)
=
\left(\begin{array}{c}
1 \\ 
0 
\end{array}\right).
\end{eqnarray*}
This implies
$(x,y,z) = (0,0,1)$ since $x$, $y$ and $z$ are real.
\qed
\end{proof}

\begin{fact}
\label{Fact:trans}
Suppose that $(x,y,z)$ stabilizes $|\psi\rangle$. 
Then $T|\psi\rangle$ is stabilized by $\frac{1}{\sqrt{2}}\cdot(x-y,x+y,\sqrt{2}z)$ 
, $HT|\psi\rangle$ is stabilized by $\frac{1}{\sqrt{2}}\cdot(\sqrt{2}z, -x-y, x-y)$ 
, and $PHT|\psi\rangle$ is stabilized by $\frac{1}{\sqrt{2}}\cdot(x+y, \sqrt{2}z, x-y)$. 
\end{fact}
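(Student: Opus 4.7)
My plan is to exploit the conjugation trick: if $M_{(x,y,z)}|\psi\rangle = |\psi\rangle$ and $U$ is unitary, then $(UM_{(x,y,z)}U^{-1})(U|\psi\rangle) = U|\psi\rangle$, so $U|\psi\rangle$ is stabilized by whichever real triple corresponds to the matrix $UM_{(x,y,z)}U^{-1}$. Since $\{X,Y,Z\}$ is a real basis of the traceless Hermitian $2\times 2$ matrices, and since $UM_{(x,y,z)}U^{-1}$ is again traceless Hermitian, this conjugate is automatically of the form $M_{(x',y',z')}$ for real $(x',y',z')$. So the fact reduces to computing the three maps $M_{(x,y,z)} \mapsto UM_{(x,y,z)}U^{-1}$ for $U\in\{T, HT, PHT\}$ and reading off the coefficients.

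First I would record the action of conjugation by each of the primitive gates on $X,Y,Z$ by direct $2\times 2$ multiplication:
\begin{eqnarray*}
TXT^{-1} = \tfrac{1}{\sqrt{2}}(X+Y), & TYT^{-1} = \tfrac{1}{\sqrt{2}}(-X+Y), & TZT^{-1} = Z, \\
HXH = Z, & HYH = -Y, & HZH = X, \\
PXP^{-1} = Y, & PYP^{-1} = -X, & PZP^{-1} = Z.
\end{eqnarray*}
The first line uses $e^{\pm i\pi/4} = (1\pm i)/\sqrt{2}$; the second and third are standard and follow from a one-line matrix product each.

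Next I would combine these linearly. For $U=T$, applying $T(\cdot)T^{-1}$ to $xX+yY+zZ$ yields $\tfrac{x-y}{\sqrt{2}}X+\tfrac{x+y}{\sqrt{2}}Y+zZ$, which is exactly $M_{\frac{1}{\sqrt{2}}(x-y,\,x+y,\,\sqrt{2}z)}$, giving the first stabilizer. For $U=HT$, I then conjugate this result by $H$ using the second line above, obtaining $\tfrac{x-y}{\sqrt{2}}Z - \tfrac{x+y}{\sqrt{2}}Y + zX$, i.e.\ $M_{\frac{1}{\sqrt{2}}(\sqrt{2}z,\,-x-y,\,x-y)}$. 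For $U=PHT$, I conjugate once more by $P$ using the third line, which swaps the $X$- and $Y$-coefficients (with a sign flip on what was $Y$), producing $M_{\frac{1}{\sqrt{2}}(x+y,\,\sqrt{2}z,\,x-y)}$.

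There is no real obstacle here; the only place one can slip is in the sign bookkeeping when conjugating $Y$ by $T$ and $P$ (both introduce a sign, and it is easy to misplace the factor of $i$ when expanding $e^{\pm i\pi/4}$). I would double-check those two computations explicitly and note that each claimed triple is real, confirming that the conjugate operator is indeed a real linear combination of $X,Y,Z$ and that Fact \ref{Fact:trans} follows.
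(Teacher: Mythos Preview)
Your proposal is correct and essentially identical to the paper's own proof: both use the conjugation identity $U M U^{\dagger}$ stabilizes $U|\psi\rangle$, record the same Pauli-conjugation formulas for $T$, $H$, $P$, and then compose them to read off the stabilizing triples for $T|\psi\rangle$, $HT|\psi\rangle$, and $PHT|\psi\rangle$. The only (minor) addition you make is the remark that $\{X,Y,Z\}$ spans the traceless Hermitian matrices, guaranteeing the conjugate is again of the form $M_{(x',y',z')}$ with real entries; the paper leaves this implicit.
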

\begin{proof}
Let $U$ be an arbitrary unitary matrix of degree two, and suppose that 
matrix $M$ of degree two stabilizes $|\psi\rangle$. 
Since  
\begin{eqnarray*}
UMU^\dagger U|\psi\rangle = UM |\psi\rangle = U |\psi\rangle, 
\end{eqnarray*}
$UMU^\dagger$ stabilizes $U|\phi\rangle$. 
Therefore transitions of each stabilizer matrix is given by 
\begin{eqnarray}
HXH^\dagger = Z, \quad HYH^\dagger = -Y, \quad HZH^\dagger = X, 
\label{Eq:transR} \\
PXP^\dagger = Y, \quad PYP^\dagger = -X, \quad PZP^\dagger = Z, 
\label{Eq:transP} \\
TXT^\dagger = \frac{X+Y}{\sqrt{2}}, \quad 
TYT^\dagger = \frac{Y-X}{\sqrt{2}}, \quad
TZT^\dagger = Z.
\label{Eq:trans} 
\end{eqnarray}

Suppose that $|\psi\rangle$ is stabilized by $(x,y,z)$. 
Eq . (\ref{Eq:trans}) gives the stabilizer matrix of $T|\psi\rangle$:
\begin{eqnarray}
\label{Eq:koteiT}
T(xX+yY+zZ)T^\dagger = 
\frac{(x-y)X}{\sqrt{2}} + \frac{(x+y)Y}{\sqrt{2}} + zZ.
\end{eqnarray}
Eq. (\ref{Eq:transR}) and  Eq. (\ref{Eq:koteiT}) give the stabilizer matrix of $HT|\psi\rangle$: 
\begin{eqnarray}
\label{Eq:koteiHT}
H \left( \frac{(x-y)X}{\sqrt{2}} + \frac{(x+y)Y}{\sqrt{2}} + zZ \right) H^\dagger 
 =  zX - \frac{(x+y)Y}{\sqrt{2}} + \frac{(x-y)Z}{\sqrt{2}}. 
\end{eqnarray}
Eq. (\ref{Eq:transP}) and Eq. (\ref{Eq:koteiHT}) give the stabilizer matrix of $PHT|\psi\rangle$: 
\begin{eqnarray*}
P\left( zX - \frac{(x+y)Y}{\sqrt{2}} + \frac{(x-y)Z}{\sqrt{2}} \right) P^\dagger  
& = & \frac{(x+y)X}{\sqrt{2}} + zY + \frac{(x-y)Z}{\sqrt{2}}.
\end{eqnarray*}
\qed
\end{proof}

\medskip

If the number of $T$ in $C$ is one, then we can see that Statement (II-B) is true by a direct computation.
Hence we only need to consider normal form circuits with at least 
two $T$-gates.

Let $C$ be a normal form circuit containing $k \geq 2 $ $T$-gates: 
\begin{eqnarray*}
C = C_k T C_{k-1} T \cdots T C_{0}, 
\end{eqnarray*}
where $C_{k} \in \{I, H, PH\}$, $C_{i} \in \{H, PH\}$ for $1 \leq i < k$, and $C_0$ is a clifford circuit. 
For $\ell \leq k$, let $C_{(\ell)}$ be 
a subcircuit of $C$ defined as
\begin{eqnarray*}
C_{(\ell)}=C_{\ell}TC_{\ell-1}T \cdots T C_{0}. 
\end{eqnarray*}

In order to show Statement (II-B), it is sufficient to
show that the stabilizer matrix of $C|0\rangle$ is {\it not} $(0,0,1)$
(by Fact \ref{Fact:Z}). 
To see this, we observe the transition of the stabilizer matrices
of $C_{(\ell)} |0 \rangle$ for $\ell= 0, \ldots, k$.

Since $C_{0}$ contains only $H$ and $P$, 
$C_0|0\rangle$ is stabilized by 
$(x,y,z)\in \{(0,0,\pm1), (0,\pm1,0), \linebreak (\pm1,0,0)\}$
(by Eqs. (\ref{Eq:transR}) and (\ref{Eq:transP})). 
From Fact \ref{Fact:trans}, $C_{(\ell)}|0\rangle$ 
is stabilized by a matrix of the form
\begin{eqnarray}
\label{Eq:type}
\frac{1}{\sqrt{2}^\ell}  (x_a + x_b \sqrt{2} , y_a + y_b \sqrt{2}, 
z_a + z_b \sqrt{2}), 
\end{eqnarray}
where $x_a,x_b,y_a,y_b,z_a,z_b \in \mathbb{Z}$. 
From Eq. (\ref{Eq:type}) and Fact \ref{Fact:trans}, 
$TC_{(\ell)}|0\rangle$, $HTC_{(\ell)}|0\rangle$, 
and $PHTC_{(\ell)}|0\rangle$ are stabilized by 
\begin{eqnarray}
\label{Eq:T}
\frac{1}{\sqrt{2}^{\ell+1}} 
\left(
(x_a-y_a) + (x_b - y_b) \sqrt{2}, \right. 
\left. (x_a+y_a) + (x_b+y_b) \sqrt{2},  2z_b + z_a \sqrt{2} \right), \\
\label{Eq:HT}
\frac{1}{\sqrt{2}^{\ell+1}}
\left(
 2z_b + z_a \sqrt{2}, -(x_a+y_a) - (x_b+y_b) \sqrt{2}, \right. 
\left.(x_a-y_a) + (x_b - y_b) \sqrt{2} \right),  \\
\label{Eq:PHT}
\frac{1}{\sqrt{2}^{\ell+1}}
\left(
 (x_a+y_a) + (x_b + y_b) \sqrt{2},  2z_b + z_a \sqrt{2},  \right. 
 \left.
(x_a-y_a) + (x_b - y_b) \sqrt{2} \right) , 
\end{eqnarray}
respectively. 

Consider a circuit $C$ that contains $\ell$ $T$-gates and $C|0\rangle$
is stabilized by a matrix of the form Eq. (\ref{Eq:type}).
We define nine classes of circuits depending on the parities
of $x_a, x_b, y_a, y_b, z_a$ and $z_b$ in Eq. (\ref{Eq:type}).
\begin{itemize}
\item[T1:] $x_b$ and $z_b$ are {\bf odd} numbers, and other four are {\bf even} numbers. 
\item[T2:] $y_b$ and $z_b$ are {\bf odd} numbers, and other four are {\bf even} numbers. 
\item[T3:] $x_b$ and $y_b$ are {\bf odd} numbers, and other four are {\bf even} numbers. 
\item[T4:] $x_b$, $y_a$ and $z_a$ are {\bf odd} numbers, and $x_a$, $y_b$ and $z_b$ are {\bf even} numbers. 
\item[T5:] $x_a$, $y_b$ and $z_a$ are {\bf odd} numbers, and $x_b$, $y_a$ and $z_b$ are {\bf even} numbers. 
\item[T6:] $x_a$, $y_a$ and $z_b$ are {\bf odd} numbers, and $x_b$, $y_b$ and $z_a$ are {\bf even} numbers. 
\item[T7:] $x_a$ is {\bf even} number, and the other five are {\bf odd} numbers. 
\item[T8:] $y_a$ is {\bf even} number, and the other five are {\bf odd} numbers. 
\item[T9:] $z_a$ is {\bf even} number, and the other five are {\bf odd} numbers. 
\end{itemize} 

Note that every circuit $C$ with $C|0 \rangle = |0 \rangle$ does 
not belong to every class since $|0 \rangle$ is stabilized by
$(0,0,1)$, i.e., all of $x_a, x_b, y_a$ and $y_b$ must be even.  
We are now ready to finish  the proof of Statement (II-B).

When $k = 2$, we can confirm that $C\left|0 \right\rangle$ 
is not stabilized by $(0,0,1)$ by
computing all patterns directly, and thus $C|0\rangle \neq |0\rangle$. 
We now assume $k \geq  3$. We divide the proof into two cases depending on the stabilizer matrix of $C_0\left|0 \right\rangle$. 

\medskip
\noindent
(Case\ 1) \ $C_0|0\rangle$ is stabilized by $(0,0,\pm1)$.  \quad

We can easily check that $C_{(2)}|0\rangle=C_2TC_1TC_0|0\rangle$ 
is stabilized by $(x,y,z)=1/2 \cdot (0,\pm \sqrt{2}, \pm \sqrt{2})$ or 
$1/2 \cdot (\pm \sqrt{2},0,  \pm \sqrt{2})$ using Fact \ref{Fact:trans}. 
Namely, $C_{(2)}$ belongs to T1 or T2. 
In addition, it is easy to confirm that Eqs. (\ref{Eq:T}),  (\ref{Eq:HT}) and (\ref{Eq:PHT}) 
give the following two facts.
\begin{fact}
\label{Fact:12}
If $C_{(\ell)}$ belongs to T1 or T2, then $HTC_{(\ell)}$ belongs to T2, 
and $PHTC_{(\ell)}$ belongs to T1. 
\qed
\end{fact}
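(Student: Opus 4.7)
The proof will be a direct parity-tracking computation based on the stabilizer transition formulas Eqs.~(\ref{Eq:HT}) and~(\ref{Eq:PHT}). My plan is, for each of the two possible starting classes T1 and T2, to read off the new six integer coefficients $(x'_a, x'_b, y'_a, y'_b, z'_a, z'_b)$ of the stabilizer of $HTC_{(\ell)}|0\rangle$ and of $PHTC_{(\ell)}|0\rangle$, and to check by inspection which two of them are odd, matching this against the definitions of T1 and T2.

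Concretely, Eq.~(\ref{Eq:HT}) says that $HT$ sends the coefficient tuple $(x_a, x_b, y_a, y_b, z_a, z_b)$ to $(2z_b,\ z_a,\ -(x_a+y_a),\ -(x_b+y_b),\ x_a-y_a,\ x_b-y_b)$, while Eq.~(\ref{Eq:PHT}) says that $PHT$ sends it to $(x_a+y_a,\ x_b+y_b,\ 2z_b,\ z_a,\ x_a-y_a,\ x_b-y_b)$. First I would substitute the T1 pattern, namely $x_b, z_b$ odd and the remaining four coefficients even: for $HT$, the only surviving odd coordinates are $y'_b = -(x_b+y_b)$ and $z'_b = x_b - y_b$, which is exactly the T2 pattern; for $PHT$, the only odd coordinates turn out to be $x'_b = x_b+y_b$ and $z'_b = x_b-y_b$, which is exactly T1. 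Second I would do the same for the T2 pattern ($y_b, z_b$ odd, the other four even) and verify the identical conclusion by a symmetric calculation, since the role played by $x_b$ in T1 is now played by $y_b$ in T2, and $x_b \pm y_b$ remains odd.

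I do not anticipate any genuine obstacle here; the entire content is four parity checks on six linear combinations each. The one point worth highlighting in advance is that the term $2z_b$ appearing in both transition formulas is automatically even, so the parity of $z_b$ in the hypothesis never propagates to the new $x'_a$ (for $HT$) or $y'_a$ (for $PHT$). This evenness of $2z_b$ is precisely what makes the parity outcome independent of whether the input was T1 or T2, and is the structural reason why the target classes (T2 for $HT$ and T1 for $PHT$) can be stated uniformly in a single fact rather than requiring a finer case split.
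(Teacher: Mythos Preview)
Your proposal is correct and is exactly the approach the paper has in mind: the paper states Fact~\ref{Fact:12} with a bare \qed, indicating only that it follows by inspecting Eqs.~(\ref{Eq:HT}) and~(\ref{Eq:PHT}), which is precisely the parity-tracking computation you spell out. Your observation that the $2z_b$ term is automatically even, making the outcome uniform over T1 and T2, is the right structural explanation and matches the paper's implicit reasoning.
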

\begin{fact}
\label{Fact:3}
If $C_{(\ell)}$ belongs to T1 or T2, then $TC_{(\ell)}$ belongs to T3. 
\qed
\end{fact}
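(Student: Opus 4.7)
The plan is to verify Fact \ref{Fact:3} by a direct parity calculation using the explicit transformation formula (\ref{Eq:T}). That formula tells us exactly how the six integer coefficients $(x_a, x_b, y_a, y_b, z_a, z_b)$ describing the stabilizer of $C_{(\ell)}|0\rangle$ update when a $T$-gate is prepended: the new sextuple $(x'_a, x'_b, y'_a, y'_b, z'_a, z'_b)$ equals $(x_a - y_a,\, x_b - y_b,\, x_a + y_a,\, x_b + y_b,\, 2z_b,\, z_a)$. Note in particular that $z'_a = 2z_b$ is automatically even, so the only real work is to track the parities of the five remaining entries.

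Next I would handle the two starting classes separately. If $C_{(\ell)} \in \mathrm{T1}$, then by definition $x_b$ and $z_b$ are odd while $x_a, y_a, y_b, z_a$ are even; substituting into the update rule gives $(x'_a, x'_b, y'_a, y'_b, z'_a, z'_b)$ with parities (even, odd, even, odd, even, even), which is precisely the defining pattern of T3. If instead $C_{(\ell)} \in \mathrm{T2}$, then $y_b$ and $z_b$ are odd and the other four coefficients are even, and the same substitution yields parities (even, odd, even, odd, even, even) again, i.e.\ T3.

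I do not expect any genuine obstacle here; the entire argument is parity bookkeeping once (\ref{Eq:T}) is accepted. The only subtleties are keeping straight the convention that the subscript ``$a$'' labels the rational part of an entry and ``$b$'' its $\sqrt{2}$ coefficient, and noticing that the $z$-component swaps roles under the update ($z_a \mapsto z'_b$ and $2z_b \mapsto z'_a$). This swap, together with the factor of $2$, is exactly what forces $z'_a$ to be even no matter what $z_b$ was, and it is the mechanism that collapses both T1 and T2 onto the single class T3.
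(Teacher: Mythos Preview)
Your proof is correct and follows exactly the approach the paper intends: the paper states that Facts~\ref{Fact:12} and~\ref{Fact:3} are ``easy to confirm'' from Eqs.~(\ref{Eq:T}), (\ref{Eq:HT}) and (\ref{Eq:PHT}), and you have simply written out the parity bookkeeping explicitly using Eq.~(\ref{Eq:T}). There is nothing to add.
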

By Facts \ref{Fact:12} and \ref{Fact:3},  
we can conclude that
$C$ belongs to T1, T2 or T3.
This implies that the stabilizer matrix of $C|0 \rangle$ is not 
$(0,0,1)$, and hence $f_u(C) \ne I$. 

\medskip
\noindent
(Case\ 2) \ $C_0|0\rangle$ is stabilized by $(0,\pm1,0)$ or $(\pm1,0,0)$.

The proof is analogous to the proof of Case 1.

We can easily verify that $C_{(2)}|0\rangle$ is stabilized by 
$(x,y,z)=1/2 \cdot (\pm \sqrt{2},\pm 1, \pm 1)$ or 
$1/2 \cdot (\pm 1, \pm \sqrt{2} ,  \pm 1)$. 
Namely, $C_{(2)}$ belongs to T4 or T5. 
In addition, it is easy to verify that Eqs. (\ref{Eq:T}), (\ref{Eq:HT}) and (\ref{Eq:PHT}) 
give the following fact. 
\begin{fact}
All of the following is true:
\begin{itemize}
\item[(i)]
If $C_{(\ell)}$ belongs to T4 or T5, then $HTC_{(\ell)}$ belongs to T7, 
and $PHTC_{(\ell)}$ belongs to T8. 
\item[(ii)]
If $C_{(\ell)}$ belongs to T7 or T8, then $HTC_{(\ell)}$ belongs to T4, 
and $PHTC_{(\ell)}$ belongs to T5. 
\label{Fact:59_1}
\item[(iii)]
If $C_{(\ell)}$ belongs to T4 or T5, then $TC_{(\ell)}$ belongs to T9. 
\label{Fact:59_2}
\item[(iv)]
If $C_{(\ell)}$ belongs to T7 or T8, then $TC_{(\ell)}$ belongs to T6. 
\end{itemize}
\qed
\end{fact}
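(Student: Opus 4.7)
The plan is to prove this fact by a direct parity computation using Eqs.~(\ref{Eq:T}), (\ref{Eq:HT}), and (\ref{Eq:PHT}). The essential observation is that each of those equations expresses the new integer sextuple $(x_a',x_b',y_a',y_b',z_a',z_b')$ as an integer linear combination of the old sextuple $(x_a,x_b,y_a,y_b,z_a,z_b)$ with coefficients in $\{-1,0,1,2\}$. Consequently the parities of the new sextuple depend only on the parities of the old one, and the ``doubling'' entry $2z_b$ is always even regardless of its input. The entire fact therefore reduces to a finite check in $(\mathbb{Z}/2\mathbb{Z})^6$.

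First I would extract the three parity rules, using $a - b \equiv a + b \pmod 2$. Prepending $T$ sends $(x_a,x_b,y_a,y_b,z_a,z_b)$ to $(x_a+y_a,\, x_b+y_b,\, x_a+y_a,\, x_b+y_b,\, 0,\, z_a) \pmod 2$; prepending $HT$ sends it to $(0,\, z_a,\, x_a+y_a,\, x_b+y_b,\, x_a+y_a,\, x_b+y_b)$; and prepending $PHT$ sends it to $(x_a+y_a,\, x_b+y_b,\, 0,\, z_a,\, x_a+y_a,\, x_b+y_b)$. Next I would read off the parity vectors of the classes appearing in the statement, namely $\mathrm{T4}=(0,1,1,0,1,0)$, $\mathrm{T5}=(1,0,0,1,1,0)$, $\mathrm{T6}=(1,0,1,0,0,1)$, $\mathrm{T7}=(0,1,1,1,1,1)$, $\mathrm{T8}=(1,1,0,1,1,1)$, and $\mathrm{T9}=(1,1,1,1,0,1)$, and verify each of (i)--(iv) by applying the appropriate rule.

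A labor-saving observation is that all three parity rules depend only on the triple $(x_a+y_a,\, x_b+y_b,\, z_a) \pmod 2$. For $\mathrm{T4}$ and $\mathrm{T5}$ this triple equals $(1,1,1)$, and for $\mathrm{T7}$ and $\mathrm{T8}$ it equals $(1,0,1)$. Thus only two representative computations are needed, each producing one output sextuple for each of $T$, $HT$, $PHT$; the triple $(1,1,1)$ is expected to yield $\mathrm{T9}$, $\mathrm{T7}$, $\mathrm{T8}$ respectively (covering (iii) and both halves of (i)), while the triple $(1,0,1)$ yields $\mathrm{T6}$, $\mathrm{T4}$, $\mathrm{T5}$ (covering (iv) and both halves of (ii)). The main obstacle is purely organizational: keeping the bookkeeping transparent rather than letting it degenerate into an opaque table of twelve parity checks. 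The reduction to two triples, together with the fact that $\mathrm{T4}$--$\mathrm{T5}$ and $\mathrm{T7}$--$\mathrm{T8}$ are paired precisely because they share a triple, explains the structure of the fact and provides a built-in sanity check on the computation.
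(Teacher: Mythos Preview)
Your proposal is correct and follows precisely the approach the paper indicates: the paper simply asserts that the fact is ``easy to verify'' from Eqs.~(\ref{Eq:T}), (\ref{Eq:HT}), (\ref{Eq:PHT}), and your parity computation is exactly that verification. Your reduction to the triple $(x_a+y_a,\,x_b+y_b,\,z_a)\bmod 2$ is a neat organizational shortcut not mentioned in the paper, but the underlying method is the same.
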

By the above fact, we can show that
$C$ belongs to T4, T5, T6, T7, T8 or T9.
This implies that the stabilizer matrix of $C|0\rangle$ is not 
$(0,0,1)$, 
and hence $f_u(C) \ne I$.
This completes the proof of Case 2, and of Statement (II-B).
\qed

\section{The number of normal form circuits}
\label{number_of_T_gates}

Our main theorem can also be used to derive
the number of
$2 \times 2$ matrices computed by a circuit over the 
standard basis $\{H, P, T\}$ using at most $n$ $T$-gates.
\begin{corollary}
\label{coro_T_1}
For all nonnegative integers $n$,
$\left| {\bf M_{n}} \right| = | {\cal C}_1 | \cdot \left(3 \cdot 2^{n} - 2 \right) 
                           = 192\cdot \left( 3 \cdot 2^{n} - 2 \right). 
$
\end{corollary}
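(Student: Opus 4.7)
The plan is to use Theorem \ref{theo:normal_form} to replace the problem of counting matrices in ${\bf M_n}$ by the problem of counting normal form circuits with at most $n$ $T$-gates. By Theorem \ref{theo:normal_form}, the map sending a normal form circuit $C$ to $f_u(C)$ is injective on normal forms, and by Part (I) of the proof, every matrix in ${\bf M_n}$ is represented by at least one normal form circuit. Moreover the normalization process does not introduce new $T$-gates (each basic rule consumes a $T$ from the left and reintroduces at most one), so the normal form of any circuit with at most $n$ $T$-gates also has at most $n$ $T$-gates. Hence $|{\bf M}_n|$ equals the number of normal form circuits with at most $n$ $T$-gates.

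Next I would count those using the explicit characterization: a normal form circuit with exactly $k$ $T$-gates is of the form $W_k T W_{k-1} T \cdots T W_1 T W_0$, with $W_k \in \{I, H, PH\}$, $W_i \in \{H, PH\}$ for $1 \le i \le k-1$, and $W_0 \in {\cal C}_1$. This gives $|{\cal C}_1| = 192$ choices when $k=0$, and for $k \ge 1$ exactly $3 \cdot 2^{k-1} \cdot 192$ choices (three choices for $W_k$, two choices for each of the $k-1$ middle $W_i$'s, and $192$ for $W_0$).

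Finally I would sum the geometric series:
\begin{eqnarray*}
|{\bf M}_n| &=& 192 + \sum_{k=1}^{n} 3 \cdot 2^{k-1} \cdot 192 \\
&=& 192 \left( 1 + 3 \sum_{k=1}^{n} 2^{k-1} \right) \\
&=& 192 \bigl( 1 + 3(2^n - 1) \bigr) \\
&=& 192 \cdot (3 \cdot 2^n - 2),
\end{eqnarray*}
which is the desired formula.

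There is no real obstacle here: all the hard work has already been done in Theorem \ref{theo:normal_form}. The only point that deserves a sentence of justification is the claim that a normal form of a circuit with at most $n$ $T$-gates has at most $n$ $T$-gates, which follows immediately from the description of the normalization in Section \ref{sec:proof_of_I} (the number of $T$'s is non-increasing at every step, since the basic rule $W T \mapsto S T W'$ preserves the count and the merging step $\ldots T\, I\, T \ldots \mapsto \ldots P \ldots$ strictly decreases it).
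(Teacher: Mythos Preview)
Your proof is correct and follows essentially the same approach as the paper: both use Theorem~\ref{theo:normal_form} to identify $|{\bf M}_n|$ with the number of normal form circuits having at most $n$ $T$-gates, and then count those using the explicit structure of normal forms. The paper packages the count as the recurrence $|{\bf M}_0|=192$, $|{\bf M}_n|=2|{\bf M}_{n-1}|+384$ and solves it, whereas you count normal forms with exactly $k$ $T$-gates directly and sum the geometric series; these are equivalent, and you are slightly more explicit than the paper in justifying why normalization never increases the $T$-count.
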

\begin{proof}
The definition of the normal form and Theorem \ref{theo:normal_form} gives
\begin{eqnarray*}
\label{nazonosuuretu}
\left| {\bf  M}_{n} \right| = \left\{ \begin{array}{l} \alpha / 2 \ , \ n=0, \\ 
2  \left| {\bf  M}_{n-1} \right| + \alpha \ , \  n > 0, \end{array} \right. 
\end{eqnarray*}
where $\alpha = 384$. 
The corollary follows from this recurrence formula. 
\qed
\end{proof}
\begin{corollary}
\label{coro_T_2}
For all positive integers $n$,
$\left| {\bf M_{=n}} \right| = 576 \cdot 2^{n-1}$.
\end{corollary}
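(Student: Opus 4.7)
The plan is to derive this directly from Corollary \ref{coro_T_1} by exploiting the definition $\mathbf{M}_{=n} = \mathbf{M}_n \setminus \mathbf{M}_{n-1}$ together with the trivial inclusion $\mathbf{M}_{n-1} \subseteq \mathbf{M}_n$. The inclusion holds because any circuit over $\{H,P,T\}$ with at most $n-1$ $T$-gates is, in particular, a circuit with at most $n$ $T$-gates, so the corresponding matrix lies in $\mathbf{M}_n$.

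Given the inclusion, the cardinality of the set difference is simply the difference of cardinalities:
\begin{eqnarray*}
|\mathbf{M}_{=n}| \;=\; |\mathbf{M}_n| - |\mathbf{M}_{n-1}|.
\end{eqnarray*}
First I would substitute the closed form from Corollary \ref{coro_T_1} into both terms on the right-hand side and simplify:
\begin{eqnarray*}
|\mathbf{M}_{=n}| \;=\; 192\bigl(3\cdot 2^{n} - 2\bigr) - 192\bigl(3\cdot 2^{n-1} - 2\bigr) \;=\; 192 \cdot 3 \cdot \bigl(2^{n} - 2^{n-1}\bigr) \;=\; 576 \cdot 2^{n-1}.
\end{eqnarray*}

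There is essentially no obstacle here; the work has already been done in proving Theorem \ref{theo:normal_form} and Corollary \ref{coro_T_1}. The only point worth noting is that the formula is valid for every positive integer $n$, including $n=1$, where it yields $576$, consistent with $|\mathbf{M}_1| - |\mathbf{M}_0| = 192\cdot 4 - 192 = 576$.
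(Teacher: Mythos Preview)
Your argument is correct and is exactly the intended one: the paper states Corollary~\ref{coro_T_2} without proof, treating it as an immediate consequence of Corollary~\ref{coro_T_1} via $|\mathbf{M}_{=n}| = |\mathbf{M}_n| - |\mathbf{M}_{n-1}|$.
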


\section{Concluding remarks} 

In this paper, we introduce the notion of a normal form of 
one qubit quantum circuits over the standard basis $\{H, T, P\}$. 
In addition, we prove that the number of $2 \times 2$ unitary matrices 
computed by  a circuit over $\{H, T, P\}$ that contains at most $n$ $T$-gates
is exactly $192 \cdot (3 \cdot 2^n - 2)$.
Obviously, it is a challenging future work to extend
our result to circuits with multiple qubits.
In other words, our next goal is to give ``$n$-qubit normal form". 



\newpage
\section*{Appendix}
\begin{table}
{\small
\begin{eqnarray*}
\mbox{\hspace*{-3mm}}
\begin{array}{rclrcl}
  IT &=& TI &
  PPHPPPHPHPPPT &=& TPPHPPHPPP\\

  PT &=& TP &
  PPPHPPHPPPT &=& TPPHPPPHPH\\

  PPT &=& TPP &
  HPHPPPHPT &=& TPPPHPPHPP\\

  PPPT &=& TPPP &
  HPHPPHPHPPT &=& THPHPPHPHPP\\

  HPPPHPHT &=& THPPH &
  HPPHPPPT &=& THPHPPPHPPP\\

  HPHPHT &=& THPHPH &
  HPPHPPHPPHT &=& THPPHPPHPPH\\

  HPPPHPHPT &=& THPPHP &
  PHPPHPPT &=& THPPPHPHPPP\\

  PHPPPHPHT &=& TPHPPH &
  HPPPHPPPHPT &=& THPPPHPPPHP\\

  HPHPHPT &=& THPHPHP &
  PPHPPHPT &=& TPHPPPHPHPP\\

  HPPPHPHPPT &=& THPPHPP &
  PPPHPPHT &=& TPPHPPPHPHP\\

  PHPPPHPHPT &=& TPHPPHP &
  HPHPPPHPPT &=& TPPPHPPHPPP\\

  PPHPPPHPHT &=& TPPHPPH &
  HPHPPHPHPPPT &=& THPHPPHPHPPP\\

  HPHPHPPT &=& THPHPHPP &
  HPHPPHPPHPHT &=& THPHPPHPPHPH\\

  HPPHT &=& THPHPPPH &
  HPPHPPHPPHPT &=& THPPHPPHPPHP\\

  HPPPHPHPPPT &=& THPPHPPP &
  HPPHPPPHPPHT &=& THPPHPPPHPPH\\

  PHPPHPPPT &=& THPPPHPH &
  HPPPHPPPHPPT &=& THPPPHPPPHPP\\

  PHPPPHPHPPT &=& TPHPPHPP &
  PPHPPHPPT &=& TPHPPPHPHPPP\\

  PPHPPPHPHPT &=& TPPHPPHP &
  PPPHPPHPT &=& TPPHPPPHPHPP\\

  HPHPPPHPPPT &=& TPPPHPPH &
  HPHPPHPPHPHPT &=& THPHPPHPPHPHP\\

  HPHPHPPPT &=& THPHPHPPP &
  HPPHPPHPPHPPT &=& THPPHPPHPPHPP\\

  HPHPPHPHT &=& THPHPPHPH &
  HPPHPPPHPPHPT &=& THPPHPPPHPPHP\\

  HPPHPT &=& THPHPPPHP &
  HPPPHPPPHPPPT &=& THPPPHPPPHPPP\\

  PHPPHT &=& THPPPHPHP &
  PPPHPPHPPT &=& TPPHPPPHPHPPP\\

  PHPPPHPHPPPT &=& TPHPPHPPP &
  HPHPPHPPHPHPPT &=& THPHPPHPPHPHPP\\

  PPHPPHPPPT &=& TPHPPPHPH &
  HPHPPPHPPHPPHT &=& THPHPPPHPPHPPH\\

  PPHPPPHPHPPT &=& TPPHPPHPP &
  HPPHPPHPPHPPPT &=& THPPHPPHPPHPPP\\

  HPHPPPHT &=& TPPPHPPHP &
  HPPHPPPHPPHPPT &=& THPPHPPPHPPHPP\\

  HPHPPHPHPT &=& THPHPPHPHP &
  HPHPPHPPHPHPPPT &=& THPHPPHPPHPHPPP\\

  HPPHPPT &=& THPHPPPHPP &
  HPHPPPHPPHPPHPT &=& THPHPPPHPPHPPHP\\

  PHPPHPT &=& THPPPHPHPP &
  HPPHPPPHPPHPPPT &=& THPPHPPPHPPHPPP\\

  HPPPHPPPHT &=& THPPPHPPPH &
  HPHPPPHPPHPPHPPT &=& THPHPPPHPPHPPHPP\\

  PPHPPHT &=& TPHPPPHPHP & 
  HPHPPPHPPHPPHPPPT &=& THPHPPPHPPHPPHPPP
	\end{array}
  \end{eqnarray*} }
	\caption{The transformation rules of the form $W_0 T = T W_1$.}
	\end{table}
 \newpage
 
 \begin{table}
{\small
\begin{eqnarray*}
\mbox{\hspace*{-5mm}}
\begin{array}{rclrcl}
  HT &=& HT &
  HPPHPPPHPHPPPT &=& HTPPHPPHPPP\\

  HPT &=& HTP &
  HPPPHPPHPPPT &=& HTPPHPPPHPH\\

  HPPT &=& HTPP &
  PHPPPHPT &=& HTPPPHPPHPP\\

  HPPPT &=& HTPPP &
  PHPPHPHPPT &=& HTHPHPPHPHPP\\

  PPPHPHT &=& HTHPPH &
  PPHPPPT &=& HTHPHPPPHPPP\\

  PHPHT &=& HTHPHPH &
  PPHPPHPPHT &=& HTHPPHPPHPPH\\

  PPPHPHPT &=& HTHPPHP &
  HPHPPHPPT &=& HTHPPPHPHPPP\\

  HPHPPPHPHT &=& HTPHPPH &
  PPPHPPPHPT &=& HTHPPPHPPPHP\\

  PHPHPT &=& HTHPHPHP &
  HPPHPPHPT &=& HTPHPPPHPHPP\\

  PPPHPHPPT &=& HTHPPHPP &
  HPPPHPPHT &=& HTPPHPPPHPHP\\

  HPHPPPHPHPT &=& HTPHPPHP &
  PHPPPHPPT &=& HTPPPHPPHPPP\\

  HPPHPPPHPHT &=& HTPPHPPH &
  PHPPHPHPPPT &=& HTHPHPPHPHPPP\\

  PHPHPPT &=& HTHPHPHPP &
  PHPPHPPHPHT &=& HTHPHPPHPPHPH\\

  PPHT &=& HTHPHPPPH &
  PPHPPHPPHPT &=& HTHPPHPPHPPHP\\

  PPPHPHPPPT &=& HTHPPHPPP &
  PPHPPPHPPHT &=& HTHPPHPPPHPPH\\

  HPHPPHPPPT &=& HTHPPPHPH &
  PPPHPPPHPPT &=& HTHPPPHPPPHPP\\

  HPHPPPHPHPPT &=& HTPHPPHPP &
  HPPHPPHPPT &=& HTPHPPPHPHPPP\\

  HPPHPPPHPHPT &=& HTPPHPPHP &
  HPPPHPPHPT &=& HTPPHPPPHPHPP\\

  PHPPPHPPPT &=& HTPPPHPPH &
  PHPPHPPHPHPT &=& HTHPHPPHPPHPHP\\

  PHPHPPPT &=& HTHPHPHPPP &
  PPHPPHPPHPPT &=& HTHPPHPPHPPHPP\\

  PHPPHPHT &=& HTHPHPPHPH &
  PPHPPPHPPHPT &=& HTHPPHPPPHPPHP\\

  PPHPT &=& HTHPHPPPHP &
  PPPHPPPHPPPT &=& HTHPPPHPPPHPPP\\

  HPHPPHT &=& HTHPPPHPHP &
  HPPPHPPHPPT &=& HTPPHPPPHPHPPP\\

  HPHPPPHPHPPPT &=& HTPHPPHPPP &
  PHPPHPPHPHPPT &=& HTHPHPPHPPHPHPP\\

  HPPHPPHPPPT &=& HTPHPPPHPH &
  PHPPPHPPHPPHT &=& HTHPHPPPHPPHPPH\\

  HPPHPPPHPHPPT &=& HTPPHPPHPP &
  PPHPPHPPHPPPT &=& HTHPPHPPHPPHPPP\\

  PHPPPHT &=& HTPPPHPPHP &
  PPHPPPHPPHPPT &=& HTHPPHPPPHPPHPP\\

  PHPPHPHPT &=& HTHPHPPHPHP &
  PHPPHPPHPHPPPT &=& HTHPHPPHPPHPHPPP\\

  PPHPPT &=& HTHPHPPPHPP &
  PHPPPHPPHPPHPT &=& HTHPHPPPHPPHPPHP\\

  HPHPPHPT &=& HTHPPPHPHPP &
  PPHPPPHPPHPPPT &=& HTHPPHPPPHPPHPPP\\

  PPPHPPPHT &=& HTHPPPHPPPH &
  PHPPPHPPHPPHPPT &=& HTHPHPPPHPPHPPHPP\\

  HPPHPPHT &=& HTPHPPPHPHP &
  PHPPPHPPHPPHPPPT &=& HTHPHPPPHPPHPPHPPP\\
  \end{array}
	\end{eqnarray*} } 
\caption{The transformation rules of the form $W_0 T = H T W_1$.
There is a rule $H W_0 T = H T W_1$ in this table
if and only if there is a rule $W_0 T = T W_1$ in the table
for $S_0 = I$ (Table 1).}
\end{table}

\newpage
\begin{table}
{\small
\begin{eqnarray*}
\mbox{\hspace*{-7mm}}
\begin{array}{rclrcl}
  PHT &=& PHT &
  HPPPHPPHPPHPPPT &=& PHTPPHPPHPPP\\

  PHPT &=& PHTP &
  PHPPPHPPHPPPT &=& PHTPPHPPPHPH\\

  PHPPT &=& PHTPP &
  PPHPPPHPT &=& PHTPPPHPPHPP\\

  PHPPPT &=& PHTPPP &
  PPHPPHPHPPT &=& PHTHPHPPHPHPP\\

  HPHT &=& PHTHPPH &
  PPPHPPPT &=& PHTHPHPPPHPPP\\

  PPHPHT &=& PHTHPHPH &
  PPPHPPHPPHT &=& PHTHPPHPPHPPH\\

  HPHPT &=& PHTHPPHP &
  HPPHPHPPPT &=& PHTHPPPHPHPPP\\

  HPPHPPHPHPT &=& PHTPHPPH &
  HPPPHPT &=& PHTHPPPHPPPHP\\

  PPHPHPT &=& PHTHPHPHP &
  PHPPHPPHPT &=& PHTPHPPPHPHPP\\

  HPHPPT &=& PHTHPPHPP &
  PHPPPHPPHT &=& PHTPPHPPPHPHP\\

  HPPHPPHPHPPT &=& PHTPHPPHP &
  PPHPPPHPPT &=& PHTPPPHPPHPPP\\

  HPPPHPPHPPHT &=& PHTPPHPPH &
  PPHPPHPHPPPT &=& PHTHPHPPHPHPPP\\

  PPHPHPPT &=& PHTHPHPHPP &
  HPHPPHPPHPPT &=& PHTHPHPPHPPHPH\\

  PPPHT &=& PHTHPHPPPH &
  PPPHPPHPPHPT &=& PHTHPPHPPHPPHP\\

  HPHPPPT &=& PHTHPPHPPP &
  HPPHPPPHPPPT &=& PHTHPPHPPPHPPH\\

  HPPHPHT &=& PHTHPPPHPH &
  HPPPHPPT &=& PHTHPPPHPPPHPP\\

  HPPHPPHPHPPPT &=& PHTPHPPHPP &
  PHPPHPPHPPT &=& PHTPHPPPHPHPPP\\

  HPPPHPPHPPHPT &=& PHTPPHPPHP &
  PHPPPHPPHPT &=& PHTPPHPPPHPHPP\\

  PPHPPPHPPPT &=& PHTPPPHPPH &
  HPHPPHPPHPPPT &=& PHTHPHPPHPPHPHP\\

  PPHPHPPPT &=& PHTHPHPHPPP &
  PPPHPPHPPHPPT &=& PHTHPPHPPHPPHPP\\

  PPHPPHPHT &=& PHTHPHPPHPH &
  HPPHPPPHT &=& PHTHPPHPPPHPPHP\\

  PPPHPT &=& PHTHPHPPPHP &
  HPPPHPPPT &=& PHTHPPPHPPPHPPP\\

  HPPHPHPT &=& PHTHPPPHPHP &
  PHPPPHPPHPPT &=& PHTPPHPPPHPHPPP\\
	
  HPPHPPHPHT &=& PHTPHPPHPPP &
  HPHPPHPPHT &=& PHTHPHPPHPPHPHPP\\

  PHPPHPPHPPPT &=& PHTPHPPPHPH &
  HPHPPPHPPHPPPT &=& PHTHPHPPPHPPHPPH\\

  HPPPHPPHPPHPPT &=& PHTPPHPPHPP &
  PPPHPPHPPHPPPT &=& PHTHPPHPPHPPHPPP\\

  PPHPPPHT &=& PHTPPPHPPHP &
  HPPHPPPHPT &=& PHTHPPHPPPHPPHPP\\

  PPHPPHPHPT &=& PHTHPHPPHPHP &
  HPHPPHPPHPT &=& PHTHPHPPHPPHPHPPP\\

  PPPHPPT &=& PHTHPHPPPHPP &
  HPHPPPHPPHT &=& PHTHPHPPPHPPHPPHP\\

  HPPHPHPPT &=& PHTHPPPHPHPP &
  HPPHPPPHPPT &=& PHTHPPHPPPHPPHPPP\\

  HPPPHT &=& PHTHPPPHPPPH &
  HPHPPPHPPHPT &=& PHTHPHPPPHPPHPPHPP\\

  PHPPHPPHT &=& PHTPHPPPHPHP &
  HPHPPPHPPHPPT &=& PHTHPHPPPHPPHPPHPPP
	\end{array}
  \end{eqnarray*}  }
\caption{The transformation rules of the form $W_0 T = P H  T W_1$.
There is a rule $PH W_0 T = PH T W_1$ in this table
if and only if there is a rule $W_0 T = T W_1$ in the table
for $S_0 = I$ (Table 1).}
\end{table} 
\end{document}